\long\def\ca#1\cb{} 
\newcommand{\poly}{\operatorname{poly}}
\newcommand{\EC}{\mathcal{E}}
\newcommand{\OC}{\mathcal{O}}
\newcommand{\Tr}{{\rm Tr}}
\newcommand{\Var}{{\rm Var}}
\renewcommand{\geq}{\geqslant}
\renewcommand{\leq}{\leqslant}
\renewcommand{\vec}[1]{\boldsymbol{#1}}  
\newcommand{\ad}{^\dagger}
\newcommand{\thv}{\vec{\theta}}
\newtheorem{corollary}{Corollary}
\newtheorem{proposition}{Proposition}
\begin{document}

\title{Higher Order Derivatives of Quantum Neural Networks with Barren Plateaus}

\author{M. Cerezo}
\affiliation{Theoretical Division, Los Alamos National Laboratory, Los Alamos, NM 87545, USA}
\affiliation{Center for Nonlinear Studies, Los Alamos National Laboratory, Los Alamos, NM, USA
}

\author{Patrick J. Coles}
\affiliation{Theoretical Division, Los Alamos National Laboratory, Los Alamos, NM 87545, USA}

\begin{abstract}
Quantum neural networks (QNNs) offer a powerful paradigm for programming near-term quantum computers and have the potential to speedup applications ranging from data science to chemistry to materials science.
However, a possible obstacle to realizing that speedup is the Barren Plateau (BP) phenomenon, whereby the gradient vanishes exponentially in the system size $n$ for certain QNN architectures. The question of whether high-order derivative information such as the Hessian could help escape a BP was recently posed in the literature. Here we show that the elements of the Hessian are exponentially suppressed in a BP, so estimating the Hessian in this situation would require a precision that scales exponentially with $n$. Hence, Hessian-based approaches do not circumvent the exponential scaling associated with BPs. We also show the exponential suppression of higher order derivatives. Hence, BPs will impact optimization strategies that go beyond (first-order) gradient descent. In deriving our results, we prove novel, general formulas that can be used to analytically evaluate any high-order partial derivative on quantum hardware. These formulas will likely have independent interest and use for training quantum neural networks (outside of the context of BPs). 
\end{abstract}

\maketitle

\section{Introduction}

Standard quantum algorithms were not developed to handle the constraints imposed by current quantum computers. Such Noisy Intermediate Scale Quantum (NISQ) devices have limited connectivity, limited qubit count, and noise that limits circuit depth.

On the other hand, training parameterized quantum circuits provides a promising approach for quantum computing in the NISQ era, as this approach adapts to the imposed constraints. Here, one utilizes a quantum computer to efficiently evaluate a cost (or loss) function $C(\thv)$ or its gradient $\nabla C(\thv)$, while employing a classical optimizer to train the parameters $\thv$ of a parameterized quantum circuit $V(\thv)$. This strategy is employed in two closely related paradigms: Variational Quantum Algorithms (VQAs)~\cite{cerezo2020VQAreview} for chemistry, optimization, and other applications~\cite{VQE,mcclean2016theory,qaoa2014,Romero,QAQC,VQSD,arrasmith2019variational,cerezo2020variationalfidelity,cirstoiu2019variational,bravo-prieto2019,xu2019variational,cerezo2020variational,kyriienko2020solving,chivilikhin2020mog,sharma2019noise}, and Quantum Neural Networks (QNNs) for classification applications~\cite{schuld2014quest,cong2019quantum,beer2020training,verdon2018universal,abbas2020power}. QNNs can be viewed as a generalization of VQAs to the case of multiple input states, and thus we will henceforth use the term QNNs to encompass all methods that train parameterized quantum circuits.

While many novel QNNs have been developed, more rigorous scaling analysis is needed for these architectures. One of the few known results is the so-called barren plateau phenomenon~\cite{mcclean2018barren,cerezo2020cost,sharma2020trainability,wang2020noise,holmes2020barren,marrero2020entanglement,uvarov2020barren,arrasmith2020effect,holmes2021connecting,patti2020entanglement}, where the cost function gradient vanishes exponentially with the system size. This can arise due to deep unstructured ansatzes~\cite{mcclean2018barren,sharma2020trainability,holmes2021connecting}, global cost functions~\cite{cerezo2020cost,sharma2020trainability}, noise~\cite{wang2020noise}, or an excess of entanglement~\cite{marrero2020entanglement,patti2020entanglement}. Regardless of the origin, when a cost landscape exhibits a barren plateau, one requires an exponential precision to determine a minimizing direction in order to navigate the landscape. Since the standard goal of quantum algorithms is polynomial scaling with the system size (in contrast to the exponential scaling of classical algorithms), the exponential scaling due to barren plateaus can destroy quantum speedup. Hence, the study and analysis of barren plateaus should be viewed as a fundamental step in the development of QNNs to guarantee that they can, in fact, provide a speedup over classical algorithms.

Recently there have been multiple strategies proposed for avoiding barren plateaus such as  employing local cost functions~\cite{cerezo2020cost,uvarov2020barren},  pre-training~\cite{verdon2019learning}, parameter correlation~\cite{volkoff2021large}, layer-by-layer training~\cite{skolik2020layerwise}, initializing layers to the identity~\cite{grant2019initialization}, and employing problem-inspired ansatzes~\cite{bharti2020iterative,bharti2020quantumAS}. These strategies are aimed at either avoiding or preventing the existence of a barren plateau, and they appear to be promising, with more research needed on their efficacy on general classes of problems. In a recent article~\cite{Huembeli2020Characterizing}, an alternative idea was proposed involving a method for actually training inside and escaping a barren plateau. Specifically, the proposal was to compute the Hessian $H$ of the cost function, and the claim was that taking a learning rate proportional to the inverse of the largest eigenvalue of the Hessian leads to an optimization method that could escape the barren plateau.


The question of whether higher-order derivative information (beyond the first-order gradient) is useful for escaping a barren plateau is interesting and is the subject of our work here. Our main results are presented here in the form of two propositions and corollaries. First, we show that the matrix elements $H_{ij}$ of the Hessian are exponentially vanishing when the cost exhibits a barren plateau. This implies that the calculation of  $H_{ij}$  requires exponential precision. In our second result we show that the magnitude of any higher-order partial derivative of the cost will also be exponentially small in a barren plateau. Our results suggest that optimization methods that use higher-order derivative information, such as the Hessian, will also face exponential scaling, and hence do not circumvent the scaling issues arising from barren plateaus.

As a byproduct of our work, we derive novel formulas for higher-order partial derivatives of the cost function, which can be used to efficiently evaluate these derivatives on quantum hardware. These formulas are obtained via the so-called Pascal tree, which we also introduce here. Due to their generality, these formulas can be generically used for training parameterized quantum circuits. Higher-order derivative information can be useful for various applications such as chemistry~\cite{o2019calculating} and solving partial differential equations~\cite{kyriienko2020solving}, as well as characterizing landscapes~\cite{Huembeli2020Characterizing} and improving optimization methods (e.g., Newton's method~\cite{gill2019practical,mari2020estimating}).

\section{Preliminaries}

To set the stage for our results, we first give some background on the cost function, the parameter shift rule, and barren plateaus.

\subsection{Cost function}

In what follows, we consider the case when the cost can be expressed as a sum of expectation values: 
\begin{equation}\label{eq:cost}
    C(\thv)=\sum_{x=1}^N C_x,\quad \!\!\text{with}\quad \!\!C_x=\Tr[O_xV(\thv)\rho_x V\ad(\thv)]\,,
\end{equation}
where $\{\rho_x\}$ is a set (of size $N$) of input states to the parameterized circuit $V(\thv)$. In order for this cost to be efficiently computable, the number of states in the input set should grow at most polynomially with the number of qubits $n$, that is, $N\in\OC(\poly(n))$. In the context of QNNs, the states $\{\rho_x\}$ can be viewed as training data points, and hence \eqref{eq:cost} is a natural cost function for QNNs. In the context of VQAs, one typically chooses $N=1$, corresponding to a single input state. In this sense, the cost function in \eqref{eq:cost} is general enough to be relevant to both QNNs and VQAs.

\subsection{Parameter shift rule}

Let $\theta_i$ be an angle that parameterizes a unitary in $V(\thv)$ as $e^{-i \theta_i \sigma_i/2 }$,  with $\sigma_i$ a Hermitian operator with eigenvalues $\pm 1$. Then, the partial derivative $\frac{\partial C(\thv)}{\partial \theta_i}=\partial_i C(\thv)$ can be computed via the parameter shift rule~\cite{mitarai2018quantum,schuld2019evaluating} as
\begin{equation}\label{eq:pshift}
    \partial_i C(\thv)= \frac{1}{2}\left(C\left(\thv_{\overline{i}}, \theta_i^{(\frac{1}{2})}\right)-C\left(\thv_{\overline{i}}, \theta_i^{(-\frac{1}{2})}\right)\right)\,.
\end{equation}
Here, $\overline{i}$ denotes the indices distinct from $i$ (i.e., the indices of parameters that are not being differentiated), and we define the notation
\begin{equation}\label{eq:deltatheta}
    \theta_i^{ (\beta)}=\theta_i+\beta\pi.
    \end{equation}
Note that the parameter shift rule in~\eqref{eq:pshift} allows one to exactly write the first-order partial derivative as a difference of cost function values evaluated at two different points. Moreover, we remark that Eq.~\eqref{eq:pshift} is not a finite difference formula, but rather corresponds to exactly evaluating the partial derivative. 

\subsection{Barren plateaus}

As discussed in~\cite{mcclean2018barren,cerezo2020cost,sharma2020trainability,wang2020noise,holmes2020barren,marrero2020entanglement,uvarov2020barren,arrasmith2020effect,holmes2021connecting,patti2020entanglement}, by analyzing the scaling of the variance of the cost function partial derivative one can detect the presence of barren plateaus in the cost function landscape. We denote this variance as $\Var_{\thv}[\partial_i C]$, where the expectation values in the variance are taken over $\thv$. Specifically, when the cost function exhibits a barren plateau one finds that $\Var_{\thv}[\partial_i C]$ is exponentially vanishing with the number of qubits, i.e.,
\begin{equation}\label{eq:bpvar}
    \Var_{\thv}[\partial_i C]\leq F(n)\,, \quad \text{with} \quad F(n)\in \OC(1/b^n)\,,
\end{equation}
for some $b>1$.  Then, combining Eq.~\eqref{eq:bpvar} with Chebyshev’s inequality,  the probability that the cost  derivative deviates from its mean value (of zero~\cite{holmes2021connecting}) is bounded as
\begin{equation}\label{eq:CHineq0}
    \Pr \left(|\partial_i C|\geq c \right) \leq \frac{\Var_{\thv}[\partial_i C]}{c^2}\leq\frac{F(n)}{c^2}\,,
\end{equation}
for any $c>0$. Equation~\eqref{eq:CHineq0} shows that, on average, the cost function partial derivatives will be exponentially small across the landscape, meaning that an exponential precision is needed to estimate the gradients and determine a cost minimizing direction. 

Barren plateaus can arise due to multiple reasons. For instance, in the seminal work of~\cite{mcclean2018barren} it was shown that deep unstructured circuits (i.e., random parametrized circuits with depths in $\OC(\poly(n))$)  form $2$-designs on $n$ qubits and hence will have barren plateaus. This phenomenon was later extended in~\cite{cerezo2020cost} to the case of layered hardware efficient ansatzes where random two-qubit gates act on alternating pairs of qubits in a brick-like fashion. Therein it was shown that the existence of barren plateaus is related to the locality of the operators  $O_x$ in~\eqref{eq:cost}. Global operators $O_x$ that act non-trivially on all qubits have barren plateaus irrespective of the depth of $V(\thv)$, while local operators which measure individual qubits will have barren plateaus only for deep circuits. Barren plateaus were then identified in perceptron-based quantum neural networks~\cite{sharma2020trainability}, and  in tasks of learning scramblers~\cite{holmes2020barren}. Moreover, the barren plateau phenomenon has also been linked to the presence of high levels of entanglement in the circuit~\cite{marrero2020entanglement,patti2020entanglement} and to the hardware noise acting throughout the computation~\cite{wang2020noise}.

In what follows, we formulate our main results irrespective of the mechanism that leads to the barren plateau, and hence our results apply to all such mechanisms.

\section{Hessian matrix elements}

Let us now state our results of how the presence of barren plateaus can affect the estimation of the Hessian. The Hessian $H$ of the cost function is a square matrix whose matrix elements are the second derivatives of $C(\thv)$, i.e.,
\begin{equation}
    H_{ij}=\frac{\partial^2 C(\thv)}{\partial\theta_i\partial\theta_j}=\partial_i\partial_j C(\thv)\,.
\end{equation}
Reference~\cite{Huembeli2020Characterizing} noted that the matrix elements of the Hessian can be written according to the parameter shift rule. Namely, one can first write
\begin{equation}\label{eq:pdj0}
    H_{ij}=\frac{1}{2}\left[ \partial_i C\left(\thv_{\overline{j}}, \theta_j^{(\frac{1}{2})}\right)-\partial_i C\left(\thv_{\overline{j}}, \theta_j^{(-\frac{1}{2})}\right)\right]
\end{equation}
and then apply the parameter shift rule a second time:
\small
\begin{align}
    H_{ij}=\frac{1}{4}\Big[&C \left(\thv_{\overline{ij}}, \theta_i^{(\frac{1}{2})},\theta_j^{(\frac{1}{2})}\right)+
    C \left(\thv_{\overline{ij}}, \theta_i^{(-\frac{1}{2})},\theta_j^{(-\frac{1}{2})}\right)\label{eq:pdj}\\
    -&C \left(\thv_{\overline{ij}}, \theta_i^{(\frac{1}{2})},\theta_j^{(-\frac{1}{2})}\right)-
    C \left(\thv_{\overline{ij}}, \theta_i^{(-\frac{1}{2})},\theta_j^{(\frac{1}{2})}\right)
    \Big]\nonumber\,.
\end{align}
\normalsize
Now, the second derivatives of the cost can be expressed as a sum of cost functions being evaluated at (up to) four points.  

From the parameter shift rule we can then derive the following bound on the probability that the  magnitude of the matrix elements $|H_{ij}|$ are larger than a given $c>0$.

\begin{proposition}\label{prop1}
Consider a cost function of the form~\eqref{eq:cost}, for which the parameter shift rule of~\eqref{eq:pshift} holds. 
Let $H_{ij}$ be the matrix elements of the Hessian as defined in~\eqref{eq:pdj0}. Then, assuming that $\langle\partial_i C\rangle_{\thv}=0$, the following inequality holds  for any $c>0$,
\begin{align}\label{eq:unionprop}
   \Pr(|H_{ij}|\geq c)
   &\leq  \frac{2\Var_{\thv}[\partial_i C]}{c^2}\,.
\end{align}
Here  $\Var_{\thv}[\partial_i C]=\langle(\partial_i C)^2\rangle_{\thv}-\langle \partial_i C\rangle^2_{\thv}$\,, where the expectation values are taken over $\thv$. 
\end{proposition}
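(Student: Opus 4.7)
The plan is to exploit the parameter shift representation of $H_{ij}$ in Eq.~\eqref{eq:pdj0}, which expresses the Hessian element as one half the difference of two first-order partial derivatives at shifted parameter points, and then to reduce the problem to two applications of Chebyshev's inequality for $\partial_i C$.

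First I would apply the triangle inequality to Eq.~\eqref{eq:pdj0}. The event $|H_{ij}|\geq c$ implies that at least one of the two terms
\begin{equation}
\left|\partial_i C\!\left(\thv_{\overline{j}},\theta_j^{(1/2)}\right)\right|\quad\text{or}\quad\left|\partial_i C\!\left(\thv_{\overline{j}},\theta_j^{(-1/2)}\right)\right|
\end{equation}
must be at least $c$. A union bound then gives
\begin{align}
\Pr(|H_{ij}|\geq c)\leq & \Pr\!\left(\left|\partial_i C(\thv_{\overline{j}},\theta_j^{(1/2)})\right|\geq c\right) \nonumber\\
&+\Pr\!\left(\left|\partial_i C(\thv_{\overline{j}},\theta_j^{(-1/2)})\right|\geq c\right).
\end{align}

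Next I would invoke the standing assumption in the barren plateau literature that the parameters $\thv$ are drawn from a distribution (typically uniform on the torus $[0,2\pi)^m$) that is invariant under constant shifts of any $\theta_j$ by $\pm\pi/2$. Under this invariance, each of the two random variables $\partial_i C(\thv_{\overline{j}},\theta_j^{(\pm 1/2)})$ has the same distribution as $\partial_i C(\thv)$, and in particular the same mean (zero, by hypothesis) and the same variance $\Var_{\thv}[\partial_i C]$. Applying Chebyshev's inequality to each of the two terms separately yields a bound of $\Var_{\thv}[\partial_i C]/c^2$ each, and summing produces the factor of $2$ in Eq.~\eqref{eq:unionprop}.

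The only step that requires care is the translation-invariance argument: one must justify that the shifted parameter $\theta_j+\pi/2$ is distributed identically to $\theta_j$ itself, so that the Chebyshev bound with the same variance applies to each shifted derivative. This is immediate when $\theta_j$ is uniform on $[0,2\pi)$ (or on any shift-invariant distribution on the circle), which is the standard setting in all barren plateau works cited. Everything else is a routine combination of the triangle inequality, the union bound, and Chebyshev's inequality, so no genuine obstacle remains once the distributional symmetry is recorded.
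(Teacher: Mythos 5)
Your proposal is correct and follows essentially the same route as the paper's proof: triangle inequality on Eq.~\eqref{eq:pdj0}, union bound over the two events that the shifted first-order derivatives exceed $c$, and Chebyshev's inequality applied to each, using shift-invariance of the parameter distribution to identify $\Var_{\thv_\pm}[\partial_i C]$ with $\Var_{\thv}[\partial_i C]$. If anything, you make the distributional-invariance step more explicit than the paper does, which simply states $\langle\cdot\rangle_{\thv}=\langle\cdot\rangle_{\thv_\pm}$ without comment.
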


\begin{proof}
Equation~\eqref{eq:pdj0} implies that the magnitudes of the Hessian matrix elements are bounded as
\begin{align}
    |H_{ij}|\leq  \frac{1}{2}\left( \left|\partial_i C\left(\thv_{\overline{j}}, \theta_j^{(\frac{1}{2})}\right)\right|+\left|\partial_i C\left(\thv_{\overline{j}}, \theta_j^{(-\frac{1}{2})}\right)\right|\right)\label{eq:upper}\,.
\end{align}
From Chebyshev’s inequality we can bound the probability that the cost  derivative deviates from its mean value (of zero) as
\begin{equation}\label{eq:CHineq}
    \Pr \left(|\partial_i C|\geq c \right) \leq \frac{\Var_{\thv}[\partial_i C]}{c^2}\,,
\end{equation}
for all $c>0$, and for all $i$.  Then, let $\EC_\pm$ be defined as the event that $\left|\partial_i C(\thv_\pm)\right| \geq c$, where $\thv_\pm=\left(\thv_{\overline{j}}, \theta_j^{(\pm\frac{1}{2})}\right)$. Note that the set of events where $|H_{ij}|\geq c$ is a subset of the set $\EC_+\cup\EC_-$. Then, from the union bound and Eq.~\eqref{eq:CHineq} we can recover~\eqref{eq:unionprop} as follows:
\begin{align}\label{eq:union}
   \Pr(|H_{ij}|\geq c)
   &\leq  \Pr(\EC_+\cup\EC_-)\\
   &\leq \Pr(\EC_+)+ \Pr(\EC_-)\\
   &\leq  \frac{\Var_{\thv_+}[\partial_i C]}{c^2}+ \frac{\Var_{\thv_-}[\partial_i C]}{c^2}\\
   &=  \frac{2\Var_{\thv}[\partial_i C]}{c^2}\,,
\end{align}
where we used the fact that $\langle\cdot\rangle_{\thv}=\langle\cdot\rangle_{\thv_\pm}$.
\end{proof}

Then, the following corollary holds

\begin{corollary}\label{cor1}
Consider the bound in Eq.~\eqref{eq:union} of Proposition~\ref{prop1}. If the cost exhibits a barren plateau, such that~\eqref{eq:bpvar} holds,  then the matrix elements of the Hessian are exponentially vanishing since 
\begin{align}\label{eq:unionpropcor}
   \Pr(|H_{ij}|\geq c)
   &\leq  \frac{2F(n)}{c^2}\,,
\end{align}
where $F(n)\in\OC(1/b^n)$ for some $b>1$.
\end{corollary}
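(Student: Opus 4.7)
The plan is to observe that this corollary follows almost immediately from Proposition~\ref{prop1} by substituting the barren plateau hypothesis into the bound already established there. First, I would recall the inequality
\begin{equation*}
\Pr(|H_{ij}|\geq c) \leq \frac{2\Var_{\thv}[\partial_i C]}{c^2},
\end{equation*}
which was proved without any assumption on how the variance scales with $n$, so it applies verbatim in the barren plateau regime.

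Next, I would invoke the barren plateau definition encoded in~\eqref{eq:bpvar}, namely $\Var_{\thv}[\partial_i C]\leq F(n)$ with $F(n)\in\OC(1/b^n)$ for some $b>1$. Substituting this upper bound on the variance into the right-hand side of the Proposition~\ref{prop1} inequality directly yields
\begin{equation*}
\Pr(|H_{ij}|\geq c) \leq \frac{2F(n)}{c^2},
\end{equation*}
which is precisely~\eqref{eq:unionpropcor}.

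Finally, I would comment on the interpretation: since $F(n)$ is exponentially small in $n$, for any fixed threshold $c$ (or even for $c$ that decreases only polynomially in $n$), the probability that a Hessian matrix element exceeds $c$ is exponentially suppressed. Consequently, a typical sampled $\thv$ produces $|H_{ij}|$ that is exponentially small, so estimating $H_{ij}$ to a useful relative precision requires exponentially many shots. There is essentially no obstacle to overcome here — the only subtlety worth flagging is that the variance bound~\eqref{eq:bpvar} must hold at the shifted points $\thv_\pm=(\thv_{\overline{j}},\theta_j^{(\pm 1/2)})$ as well, which is automatic because $\thv_\pm$ differs from $\thv$ by a deterministic shift of one coordinate and the variance in~\eqref{eq:bpvar} is taken over the full parameter distribution, as was already used in the last line of the proof of Proposition~\ref{prop1}.
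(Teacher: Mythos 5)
Your proof is correct and matches the paper's, which likewise obtains the corollary by directly substituting the variance bound~\eqref{eq:bpvar} into the inequality of Proposition~\ref{prop1}. The extra remark about the shifted points $\thv_\pm$ is a fine sanity check but, as you note, it was already absorbed into the proof of the proposition itself.
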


The proof follows by combining~\eqref{eq:unionprop} and~\eqref{eq:bpvar}. Corollary~\ref{cor1} shows that when the cost landscape exhibits a barren plateau, the matrix elements of the Hessian are exponentially vanishing with high probability. This implies that any algorithm that requires the estimation of the Hessian will requires a precision that grows exponentially with the system size.

\section{Higher order partial derivatives}

Let us now analyze the magnitude of  higher order partial derivatives in a barren plateau.  We use the following notation for the $|\vec{\alpha}|$th-order derivative
\begin{align}\label{eq:horderpd}
D^{\vec{\alpha}} C(\thv)&=\partial_{\alpha_1}\partial_{\alpha_2}\cdots \partial_{\alpha_{|\vec{\alpha}|}}C(\thv)\,,
\end{align}
where ${\vec{\alpha}}$ is an $|\vec{\alpha}|$-tuple.  Since one can take the derivative with respect to the same angle multiple times, we define the set $\vec{\Theta}$ (of size $M=|\vec{\Theta}|$) as the set of distinct angles with respect to which we take the partial derivative. Similarly, let $\overline{\vec{\Theta}}$ be the compliment of $\vec{\Theta}$, so that $\vec{\Theta}\cup\overline{\vec{\Theta}}=\vec{\theta}$. Then, for any $\Theta_k\in \vec{\Theta}$ we define $N_k$ as the multiplicity of $\Theta_k$ in $\vec{\alpha}$ such that $\sum_{k=1}^M N_k=|\vec{\alpha}|$. Since the cost function and any of its higher order partial derivatives are continuous function of  the parameters (as can be seen below via multiple applications of the parameter shift rule), one can extend Clairaut's Theorem~\cite{stewart2015multivariable}  to rewrite
\begin{align}\label{eq:higherorder}
D^{\vec{\alpha}} C(\thv)&=\partial^{N_1}_{\Theta_{1}}\cdots \partial^{N_M}_{\Theta_{M}}C(\thv)\,.
\end{align}

\begin{figure}[t]
    \centering
    \includegraphics[width=.9\columnwidth]{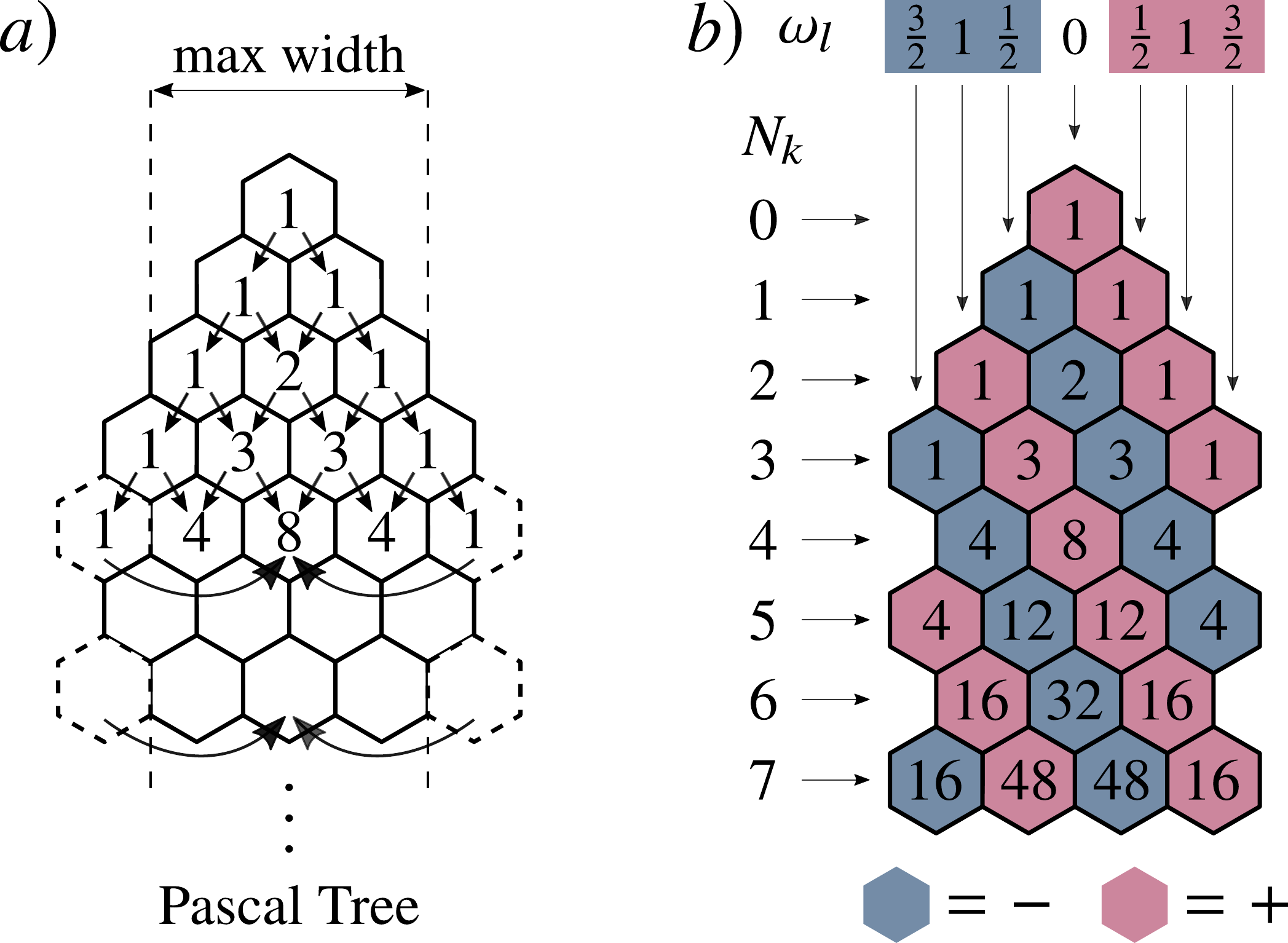}
    \caption{\textbf{The Pascal tree.} a) The Pascal tree can be obtained by modifying how a Pascal triangle is constructed. 
    In a Pascal triangle each entry of a row is obtained by adding together the numbers directly above to the left and above to the right, with blank entries considered to be equal to zero. The entries of a  Pascal tree  are obtained following the aforementioned rule, with the  additional constraint that the width of the triangle is restricted to always being smaller than a given even  number.  Moreover, once an entry in a row is outside the maximum width, its value is added to the central entry in that row (see arrows). Here the maximum width is four. b) The coefficients $d_{(\omega_l,N_l)}$ in~\eqref{eq:ds} can be obtained from the Pascal tree of (a) by adding signs to the entries of the tree. As schematically depicted, all entries in a diagonal going from top left to bottom right have the same sign, with the first entry in the first row having a positive sign. Here, each row corresponds to a given $N_l$, while entries in a row correspond to different values of $\omega_l$, with $\omega_l\in\{0,\pm 1\}$ if $N_l$ is even, and $\omega_l\in\{\pm\frac{1}{2},\pm\frac{3}{2}\}$ if $N_l$ is odd. For instance, $d_{(-\frac{1}{2},5)}=-12$.    }
    \label{fig:pascal}
\end{figure}

Then, applying the parameter shift rule  $|\vec{\alpha}|$ times we find that the $|\vec{\alpha}|$-order partial derivative can be expressed as a summation of cost functions evaluated at (up to) $2^{|\vec{\alpha}|}$ points as
\begin{equation}\label{eq:expansioncost}
 D^{\vec{\alpha}} C(\thv)=\frac{1}{2^{|\vec{\alpha}|}}\sum_{\vec{\omega}} W_{\vec{\omega}} C(\overline{\vec{\Theta}},\vec{\Theta}^{(\vec{\omega})})\,.
\end{equation}
Here we defined $\vec{\Theta}^{(\vec{\omega})}=(\Theta_1^{(\omega_1)},\ldots,\Theta_M^{(\omega_M)})$, with $\Theta_k^{(\omega_k)}=\Theta_k+\omega_k\pi$  defined analogously to~\eqref{eq:deltatheta},  and where
\begin{equation}
    W_{\vec{\omega}}=\prod_{l=1}^M d_{(\omega_l,N_l)} \quad \text{such that} \quad \sum_{\vec{\omega}}\left|W_{\vec{\omega}}\right|=2^{|\vec{\alpha}|}\,.\label{eq:ds}
\end{equation}
Also, $\vec{\omega}=(\omega_1,\ldots,\omega_{D})$, where 
$\omega_l\in\{0,\pm 1\}$ if $N_l$ is even, and $\omega_l\in\{\pm\frac{1}{2},\pm\frac{3}{2}\}$ if $N_l$ is odd.
Additionally, the coefficients $d_{\omega_l,N_k}$ can be obtained from the Pascal tree which we introduce in Fig.~\ref{fig:pascal}. In the Appendix we provide additional details regarding the coefficients $d_{(\omega_l,N_l)}$ and the  Pascal tree.

From~\eqref{eq:expansioncost} we obtain that the $(|\vec{\alpha}|+1)$th-order derivative, which we denote as $ \partial_iD^{\vec{\alpha}} C(\thv)= D^{i,\vec{\alpha}}C(\thv)$, is obtained as the sum of (up to) $2^{|\vec{\alpha}|}$ partial derivatives:
\begin{equation}\label{eq:alphaiderivative}
  D^{i,\vec{\alpha}} C(\thv)=\frac{1}{2^{|\vec{\alpha}|}}\sum_{\vec{\omega}} W_{\vec{\omega}} \partial_i C(\overline{\vec{\Theta}},\vec{\Theta}^{(\vec{\omega})})\,.
\end{equation}
Since one has to individually evaluate each term in~\eqref{eq:alphaiderivative} and since there are up to $2^{|\vec{\alpha}|}$ terms, we henceforth assume that  $|\vec{\alpha}|\in\OC(\log(n))$. This guarantees that the computation of $  D^{i,\vec{\alpha}} C(\thv)$ leads to an overhead which is (at most) $\OC(\poly(n))$.

The following proposition, which generalizes Proposition~\ref{prop1},  allows us to bound the probability that the magnitude of $\left|D^{i,\vec{\alpha}} C(\thv)\right |$ is larger than a given $c>0$.

\begin{proposition}\label{prop2}
Consider a cost function of the form~\eqref{eq:cost}, for which the parameter shift rule of~\eqref{eq:pshift} holds. Let $D^{i,\vec{\alpha}} C(\thv)$ be a higher order partial derivative of the cost as defined in~\eqref{eq:horderpd}.  Then, assuming that $\langle\partial_i C\rangle_{\thv}=0$, the following inequality holds for any $c>0$,
\begin{align}\label{eq:unionprop2}
   \Pr(\left|D^{i,\vec{\alpha}} C(\thv)\right |\geq c)
   &\leq  \frac{2^{|\vec{\alpha}|}\Var_{\thv}[\partial_i C]}{c^2}\,.
\end{align}
\end{proposition}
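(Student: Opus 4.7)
The plan is to mirror the strategy used in the proof of Proposition~\ref{prop1}, but applied to the more general parameter-shift expansion in Eq.~\eqref{eq:alphaiderivative}. The only two ingredients needed are (i) the triangle inequality applied to that expansion, and (ii) a union bound followed by Chebyshev's inequality, together with translation invariance of the distribution over $\thv$.

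Concretely, I would first rewrite Eq.~\eqref{eq:alphaiderivative} as $D^{i,\vec{\alpha}} C(\thv) = \sum_{\vec{\omega}} (W_{\vec{\omega}}/2^{|\vec{\alpha}|})\, \partial_i C\bigl(\overline{\vec{\Theta}}, \vec{\Theta}^{(\vec{\omega})}\bigr)$. By the triangle inequality and the normalization $\sum_{\vec{\omega}} |W_{\vec{\omega}}| = 2^{|\vec{\alpha}|}$ from Eq.~\eqref{eq:ds}, the magnitude $|D^{i,\vec{\alpha}} C(\thv)|$ is bounded by a convex combination of the terms $|\partial_i C(\overline{\vec{\Theta}}, \vec{\Theta}^{(\vec{\omega})})|$. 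Since a convex combination cannot exceed its maximum element, the event $\{|D^{i,\vec{\alpha}} C(\thv)|\geq c\}$ is contained in the union $\bigcup_{\vec{\omega}} \EC_{\vec{\omega}}$, where $\EC_{\vec{\omega}}$ denotes the event $|\partial_i C(\overline{\vec{\Theta}}, \vec{\Theta}^{(\vec{\omega})})|\geq c$.

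The rest is a term-by-term repetition of the Proposition~\ref{prop1} argument. The union bound yields $\Pr(|D^{i,\vec{\alpha}} C|\geq c) \leq \sum_{\vec{\omega}}\Pr(\EC_{\vec{\omega}})$; Chebyshev's inequality bounds each summand by $\Var_{\thv^{(\vec{\omega})}}[\partial_i C]/c^2$; and translation invariance of the parameter distribution (so that $\langle\cdot\rangle_{\thv} = \langle\cdot\rangle_{\thv^{(\vec{\omega})}}$, exactly as used at the end of the proof of Proposition~\ref{prop1}) collapses each variance back to $\Var_{\thv}[\partial_i C]$. Bounding the cardinality of the sum over $\vec{\omega}$ by $2^{|\vec{\alpha}|}$ then gives Eq.~\eqref{eq:unionprop2}, and specialization to $|\vec{\alpha}|=1$ recovers Proposition~\ref{prop1} as a sanity check.

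The main subtlety is the counting of distinct nonzero shift patterns. Using the structure of the Pascal tree in Fig.~\ref{fig:pascal}, each angle $\Theta_l$ with multiplicity $N_l$ contributes at most $N_l+1$ distinct values of $\omega_l$ (the three values $\{0,\pm 1\}$ if $N_l$ is even, the four values $\{\pm 1/2,\pm 3/2\}$ if $N_l$ is odd, but only two of them carry nonzero Pascal-tree weight when $N_l=1$, etc.). Hence the number of nonzero $\vec{\omega}$ is at most $\prod_{l=1}^M (N_l+1) \leq \prod_{l=1}^M 2^{N_l} = 2^{|\vec{\alpha}|}$, matching the coefficient appearing in the claimed bound; this estimate, not the probabilistic part, is the only place where the Pascal-tree machinery from the preceding section is actually used.
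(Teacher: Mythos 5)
Your proof is correct and follows essentially the same route as the paper's: triangle inequality on the parameter-shift expansion~\eqref{eq:alphaiderivative}, containment of the event $\{|D^{i,\vec{\alpha}}C|\geq c\}$ in $\bigcup_{\vec{\omega}}\EC_{\vec{\omega}}$, union bound, Chebyshev, translation invariance of the average over $\thv$, and the bound of $2^{|\vec{\alpha}|}$ on the number of terms. Your explicit convexity justification of the event inclusion (using $\sum_{\vec{\omega}}|W_{\vec{\omega}}|=2^{|\vec{\alpha}|}$) and your Pascal-tree counting of nonzero shift patterns are slightly more detailed than what the paper writes, but they are refinements of the identical argument rather than a different approach.
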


\begin{proof}

From Eq.~\eqref{eq:alphaiderivative} we can obtain the following bound  
\begin{equation}\label{eq:alphaiderivative2}
 \left|D^{i,\vec{\alpha}} C(\thv)\right |\leq\frac{1}{2^{|\vec{\alpha}|}}\sum_{\vec{\omega}} \left|W_{\vec{\omega}}\right| \left| \partial_i C(\overline{\vec{\Theta}},\vec{\Theta}^{(\vec{\omega})})\right|\,.
\end{equation}
Let us define $\EC_{\vec{\omega}}$ as the event that $\left| \partial_i C(\overline{\vec{\Theta}},\vec{\Theta}^{(\vec{\omega})})\right|\geq c$. Since~\eqref{eq:bpvar} holds, then the following chain of inequalities holds
\begin{align}
   \Pr(\left|D^{i,\vec{\alpha}} C(\thv)\right |\geq c)   &\leq  \Pr\left(\bigcup_{\vec{\omega}}\EC_{\vec{\omega}}\right)\\
   &\leq \sum_{\vec{\omega}} \Pr\left(\EC_{\vec{\omega}}\right)\\
   &\leq \sum_{\vec{\omega}} \frac{\Var_{(\overline{\vec{\Theta}},\vec{\Theta}^{(\vec{\omega})})}[\partial_i C]}{c^2}\label{eq:summation}\\
   &\leq  \frac{2^{|\vec{\alpha}|}\Var_{\thv}[\partial_i C]}{c^2}\,,\label{eq:union2}
\end{align}
where we invoked the union bound, and where we recall that $\langle\cdot\rangle_{\thv}=\langle\cdot\rangle_{(\overline{\vec{\Theta}},\vec{\Theta}^{(\vec{\omega})})}$, $\forall \vec{\omega}$. In addition, for~\eqref{eq:union2} we used the fact that the summation in~\eqref{eq:summation} has at most $2^{|\vec{\alpha}|}$ terms.
\end{proof}

Then, if the cost function exhibits a barren plateau, the following corollary follows. 
\begin{corollary}\label{cor2}
Consider the bound in Eq.~\eqref{eq:unionprop2} of Proposition~\ref{prop2}. If the cost exhibits a barren plateau, such that~\eqref{eq:bpvar} holds, then higher order partial derivatives of the cost function are exponentially vanishing since 
\begin{align}\label{eq:corolary2}
   \Pr(\left|D^{i,\vec{\alpha}} C(\thv)\right |\geq c)
   &\leq  \frac{G(n)}{c^2}\,, 
\end{align}
where $G(n)\in \OC(1/q^{n})$ for some $q>1$.
\end{corollary}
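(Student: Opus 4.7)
The plan is to chain Proposition~\ref{prop2} with the barren plateau assumption~\eqref{eq:bpvar} and then absorb the leftover combinatorial prefactor into the exponential base. First I would substitute $\Var_{\thv}[\partial_i C]\leq F(n)$ from~\eqref{eq:bpvar} directly into the right hand side of~\eqref{eq:unionprop2}. This yields
\begin{equation}
\Pr\!\left(\left|D^{i,\vec{\alpha}} C(\thv)\right|\geq c\right)\leq \frac{2^{|\vec{\alpha}|} F(n)}{c^2},
\end{equation}
which already has the desired form once we identify $G(n)=2^{|\vec{\alpha}|}F(n)$. All that remains is to check that $G(n)$ retains exponential decay in $n$.

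Next I would invoke the standing assumption $|\vec{\alpha}|\in\OC(\log(n))$ adopted just before~\eqref{eq:alphaiderivative} — the same assumption used to guarantee polynomial-time evaluation of $D^{i,\vec{\alpha}}C(\thv)$ via the expansion~\eqref{eq:alphaiderivative}. Under this assumption the prefactor satisfies $2^{|\vec{\alpha}|}\in\OC(\poly(n))$, so $G(n)$ is a product of a polynomial and a function in $\OC(1/b^n)$ with $b>1$.

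The last step is the elementary observation that a polynomial prefactor never spoils exponential decay: for any choice of $q$ with $1<q<b$ one has $\poly(n)/b^n\in\OC(1/q^n)$, since $(q/b)^n\poly(n)\to 0$ as $n\to\infty$. Fixing any such $q$ then yields $G(n)\in\OC(1/q^n)$ and closes the argument. I do not anticipate any real obstacle here: the corollary is essentially a direct specialization of Proposition~\ref{prop2} to the barren plateau regime, and the only substantive point worth making explicit is that the $2^{|\vec{\alpha}|}$ combinatorial overhead is at most polynomial under the logarithmic-size assumption on $|\vec{\alpha}|$, hence subexponential.
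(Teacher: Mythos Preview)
Your proposal is correct and follows essentially the same route as the paper: substitute~\eqref{eq:bpvar} into~\eqref{eq:unionprop2}, set $G(n)=2^{|\vec{\alpha}|}F(n)$, use $|\vec{\alpha}|\in\OC(\log n)$ to bound $2^{|\vec{\alpha}|}$ polynomially, and then argue that a polynomial prefactor cannot spoil exponential decay. The only cosmetic difference is that the paper carries out the last step by writing $G(n)\leq \kappa'/b^{L(n)}$ with $L(n)=n-\kappa\log_b n$ and showing $L(n)\in\Omega(n)$, whereas you pick any $q\in(1,b)$ and note $(q/b)^n\poly(n)\to 0$; these are equivalent elementary arguments.
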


\begin{proof}
Combining~\eqref{eq:bpvar} and~\eqref{eq:unionprop2} leads to 
\begin{align}\label{eq:corolary22}
   \Pr(\left|D^{i,\vec{\alpha}} C(\thv)\right |\geq c)
   &\leq  \frac{2^{|\vec{\alpha}|}F(n)}{c^2}\,.
\end{align}
Then, let us define $G(n)=2^{|\vec{\alpha}|}F(n)$. Since $|\vec{\alpha}|\in\OC(\log(n))$, and  $F(n)\in\OC(1/b^n)$, then  we know that there exists  $\kappa$, $\kappa'$, and $n_0$ such that $\forall n>n_0$ we respectively have $2^{|\vec{\alpha}|}\leq n^\kappa$ and $F(n)\leq \frac{\kappa '}{b^n}$.  Combining these two results we find  
\begin{equation}\label{eq:Geq}
    G(n)\leq\frac{\kappa'n^\kappa}{b^n}=\frac{\kappa'}{b^{L(n)}}\,, \quad \forall n>n_0\,,
\end{equation}
where $L(n)=(n-\kappa\log_b(n))$. Equation~\eqref{eq:Geq} shows that $G(n)\in\OC(1/b^{L(n)})$. 
Then, since
\begin{equation}
    \lim_{n\rightarrow \infty}\frac{L(n)}{n}=1\,,
\end{equation}
we have $L(n)\in\Omega(n)$, meaning  that there exists a $\widehat{\kappa}>0$ and $\widehat{n}_0$ such that $\forall n>\widehat{n}_0$, we have $L(n)\geq \widehat{\kappa}n $. The latter implies  $G(n)\leq\frac{\kappa'}{b^{\widehat{\kappa}n}}$  for all $n> \max\{n_0,\widehat{n}_0\}$, which  means that  
$G(n)\in \OC(1/q^{n})$ where $q=b^{\widetilde{\kappa}}$. Also, $q>1$ follows from $b>1$ and $\widetilde{\kappa} > 0$.
\end{proof}

Corollary~\ref{cor2} shows that, in a barren plateau, the magnitude of any efficiently computable higher order partial derivative (i.e., any partial derivative where $|\vec{\alpha}|\in\OC(\log(n))$) is exponentially vanishing in $n$ with high probability.

\section{Discussion}

In this work, we investigated the impact of barren plateaus on higher order derivatives. As shown in~\cite{Huembeli2020Characterizing} and~\cite{abbas2020power}, information on these higher-order derivatives can be used to analyze the landscape of cost functions and shed some light on the relatively obscure nature of training landscapes for Variational Quantum Algorithms and Quantum Neural Networks. Moreover, it was also suggested that one could use higher order derivative information to escape a barren plateau. We considered a cost function $C$ that is relevant to both Variational Quantum Algorithms and Quantum Neural Networks, as barren plateaus are relevant to both of these applications.


Our main result was that, when a barren plateau exists, the Hessian and other high order partial derivatives of $C$ are exponentially vanishing in $n$ with high probability. Our proof relied on the parameter shift rule, which we showed can be applied iteratively to relate higher order partial derivatives to the first order partial derivative (analogous to what Ref.~\cite{Huembeli2020Characterizing} did for the Hessian). Hence, the parameter shift rule allowed us to state the vanishing of higher order derivatives as essentially a corollary of the vanishing of the first order derivative. We remark that iterative applications of the parameter shift rule led us to a mathematically interesting construct that we called the Pascal tree, depicted in Fig.~\ref{fig:pascal}.

Our results imply that estimating higher order partial derivatives in a barren plateau is exponentially hard. Hence, any optimization strategy that requires information about partial derivatives that go beyond first-order (such as the Hessian) will require a precision that grows exponentially with $n$. We therefore surmise that, by themselves, optimizers that go beyond first order gradient descent do not appear to be a feasible solution to the barren plateau problem. More generally, our results suggest that it is better to develop strategies that avoid the appearance of the barren plateau altogether, rather than to try to escape an existing barren plateau.   

Finally, we remark that our results were derived using Eq.~\eqref{eq:alphaiderivative} and the Pascal tree, which provide novel, general formulas for arbitrary higher-order partial derivatives of the cost function in Eq.~\eqref{eq:cost}. These formulas are of interest on their own, as they can be generically employed to analytically evaluate higher-order derivatives of the cost on quantum hardware. As an example application, one could apply this method to take derivatives of quantum feature maps for solving partial differential equations~\cite{kyriienko2020solving}.

While some prior work on higher-order derivative formulas was performed in Ref.~\cite{mitarai2020theory}, our formulas are more explicit and our introduction of the Pascal tree is novel. We also remark that, in a recent post that was concurrent with our work, Ref.~\cite{mari2020estimating} introduced a formula similar to Eq.~\eqref{eq:alphaiderivative} for explicitly determining higher-order partial derivatives, albeit that work analyzes the formulas in a different context.


\textit{Acknowledgements.---} We thank Kunal Sharma for helpful discussions. Research presented in this article was supported by the Laboratory Directed Research and Development program of Los Alamos National Laboratory under project number 20180628ECR.   MC was also supported by the Center for Nonlinear Studies at LANL. PJC also acknowledges support from the LANL ASC Beyond Moore's Law project. This work was also supported by the U.S. Department of Energy (DOE), Office of Science, Office of Advanced Scientific Computing Research, under the Accelerated Research in Quantum Computing (ARQC) program.

\bibliography{ref.bib}

\begin{appendices}

\section{Explicit description of $d_{(\omega_l,N_k)}$}

In this appendix we first discuss how the parameter shift rule leads to the Pascal tree. Then, we provide analytical formulas for $d_{(\omega_l,N_k)}$.

Let us  consider the first and second order partial derivatives of the cost function with respect to the same angle. From the parameter shift rule of Eq.~\eqref{eq:pshift} we find
\small
\begin{align}
    \partial_i C(\thv)= \frac{1}{2}\Big[&\underbrace{C\left(\thv_{\overline{i}}, \theta_i^{(\frac{1}{2})}\right)}_{\times d_{(\frac{1}{2},1)}=1}-\underbrace{C\left(\thv_{\overline{i}}, \theta_i^{(-\frac{1}{2})}\right)}_{\times d_{(-\frac{1}{2},1)}=-1}\Big]\nonumber\label{eq:first}\\
        \partial_i^2 C(\thv)=\frac{1}{4}\Big[&\underbrace{C \left(\thv_{\overline{i}}, \theta_i^{(1)}\right)}_{\times d_{(1,2)}=1}+
   \underbrace{ C \left(\thv_{\overline{i}}, \theta_i^{(-1)}\right)}_{\times d_{(-1,2)}=1}-2\underbrace{C \left(\thv\right)}_{\times d_{(0,2)}=-2}    \Big]\nonumber\,.
\end{align}
\normalsize
where we can see that  $|d_{(0,2)}|=|d_{(-\frac{1}{2},1)}|+|d_{(\frac{1}{2},1)}|=2$. Similarly, if we were to take the third partial derivative with respect to $i$ we would find $|d_{(\pm1/2,3)}|=|d_{(0,2)}|+|d_{(\pm 1,2)}|$, and $|d_{(\pm3/2,3)}|=|d_{(\pm1,2)}|$. Note that this procedure forms the first four rows of the Pascal tree, which actually coincide with first four rows of the Pascal triangle. 
When taking the fourth partial derivative we have to take into account the fact that $C \left(\thv_{\overline{i}}, \theta_i^{(-2)}\right)=C \left(\thv_{\overline{i}}, \theta_i^{(2)}\right)=C \left(\thv\right)$,  since  $e^{-i\theta\sigma/2}$ is equal to $e^{-i(\theta+2\pi)\sigma/2}$ up to an  unobservable global phase. Hence, the fact that $\theta\equiv \theta^{(2)}(\text{mod } 2\pi)$ imposes a restriction on the width of the Pascal tree. Hence, following this procedure one can recover the entries in Fig.~\ref{fig:pascal}.

For arbitrary $\omega_l$ and $N_l$,  the coefficients $d_{\omega_l,N_k}$ can be analytically obtained as follows: If $N_k<2$ we have $d_{(\pm1,0)}=0$, $d_{0,0)}=1$, $d_{(\pm1/2,1)}=\pm 1$, and $d_{(\pm3/2,1)}=0$. 
Then, for  $N_k\geq 2$ 
\begin{align}\nonumber
    d_{(\omega_l,N_k)}=\begin{cases}
       (-1)^{\frac{N_k}{2}} 2^{N_k-1}\quad&\text{if}\quad \omega_l=0\,,\\
     \pm  (-1)^{\frac{N_k-1}{2}} 3\cdot 2^{N_k-3}\quad&\text{if}\quad \omega_l=\pm1/2\,,\\
    (-1)^{\frac{N_k-2}{2}}2^{N_k-2} \quad&\text{if}\quad \omega_l=\pm1\\
    \mp (-1)^{\frac{N_k-1}{2}} 2^{N_k-3}\quad&\text{if}\quad \omega_l= \pm 3/2\,,
    \end{cases}
\end{align}

Note that $\forall N_l$ we have $\sum_{\omega_l} |d_{(\omega_l,N_k)}| =2^{N_k}$.
\end{appendices}

\end{document}